\newtheorem{theorem}{Theorem}
\newtheorem{lemma}[theorem]{Lemma}
\newtheorem{observation}[theorem]{Observation}
\newcommand{\F}{\ensuremath{\mathcal{F}}\xspace}
\newcommand{\G}{\ensuremath{\mathcal{G}}\xspace}
\newcommand{\HH}{\ensuremath{\mathcal{H}}\xspace}
\newcommand{\N}{\ensuremath{\mathbb{N}}\xspace}
\title{Avoider-Enforcer Game is NP-hard}
\author{Tillmann Miltzow\footnote{Computer Science Department Utrecht University, Netherlands. Generously supported by the Netherlands Organisation for Scientific Research (NWO) under project no. 016.Veni.192.250. {\tt t.miltzow@uu.nl}},
Milo\v{s} Stojakovi{\'c}\footnote{Department of Mathematics and Informatics, Faculty of Sciences, University of Novi Sad, Serbia. Partly supported by Ministry of Education, Science and Technological Development of the Republic of Serbia (Grant No.~451-03-68/2022-14/200125). Partly supported by Provincial Secretariat for Higher Education and Scientific Research, Province of Vojvodina (Grant No.~142-451-2686/2021). {\tt milos.stojakovic@dmi.uns.ac.rs}}}
\date{}
\begin{document}

\maketitle

\begin{abstract}
    In an Avoider-Enforcer game, we are given a hypergraph.
    Avoider and Enforcer alternate in claiming an unclaimed vertex, until all the vertices of the hypergraph are claimed. Enforcer wins if Avoider claims all vertices of an edge; Avoider wins otherwise.
    We show that it is NP-hard to decide if Avoider has a winning strategy.
\end{abstract}

\section{Introduction}

\paragraph{Motivation.}
    Positional games are a class of combinatorial games that have been extensively studied in recent literature (see books~\cite{beck2008combinatorial} and~\cite{hefetz2014positional} for an overview of the field). They include popular recreational games like Tic-Tac-Toe, Hex and Sim.
    \begin{figure}[htpb]
        \centering
        \includegraphics[scale=0.85]{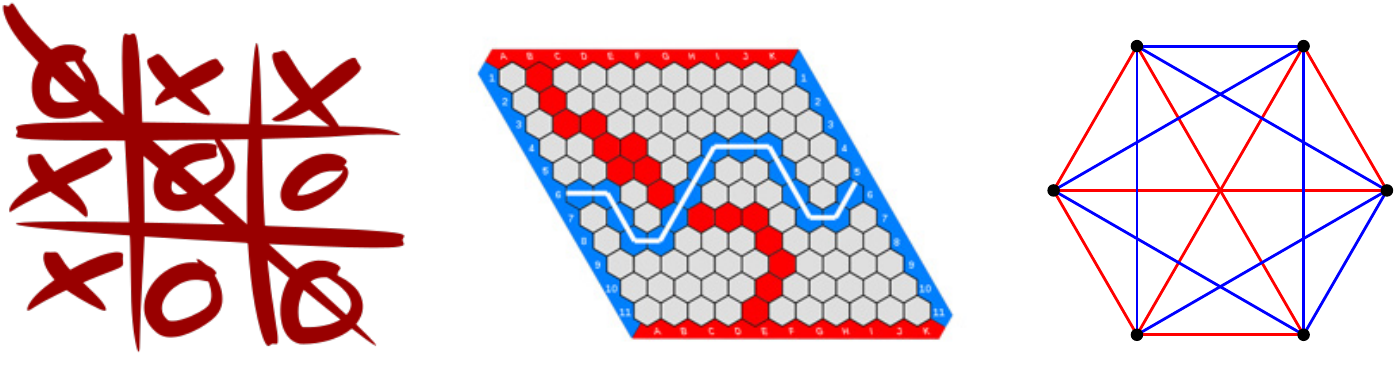}
        \caption{Recreational positional games: Tic-Tac-Toe, Hex and Sim}
    \end{figure}
    Two of the most researched types of positional games are the Maker-Breaker games and the Avoider-Enforcer games.
    While Maker-Breaker games are known to be PSPACE-complete~\cite{schaefer1978complexity, rahman20216, byskov2004maker}, even when restricted to some special games on graphs (see, e.g.,~\cite{duchene2020maker, gledel2020maker}), no classical hardness result was previously known for Avoider-Enforcer games.

    We show that it is NP-hard to determine the winner in an Avoider-Enforcer game.

\paragraph{Definitions.}
    A \textit{positional game} is a pair $(X,\F{})$, where $X$ is a finite set called a \emph{board}, and \F{} is the family of \emph{target sets}. We often refer to $(X,\F{})$ as the game hypergraph.
    The game is played by two players who alternately \textit{claim} previously unclaimed elements of $X$ until all the elements of the board are claimed.
    When it comes to the rules for determining the winner in a positional game, there are several types of games, and the two most prominent ones are Maker-Breaker games and Avoider-Enforcer games.
    In a \textit{Maker-Breaker} game, the players are called \textit{Maker} and \textit{Breaker}.
    Maker wins the game if she claims all elements of a target set from~$\mathcal{F}$, here also referred to as \textit{winning set}.
    Breaker wins otherwise, i.e.~if all elements of the board are claimed and Maker claimed no complete winning set.

    It should come as no surprise that the \textit{Avoider-Enforcer} game is played by \textit{Avoider} and \textit{Enforcer}.
    If Avoider claims all elements of a target set, which are now also called \textit{losing sets}, she loses the game, i.e., Enforcer wins. If, however, all the board elements are claimed and Avoider claimed no complete losing set, she wins the game.

\paragraph{Related Work.}
    To analyze a game, we assume that both players play optimally and then rely on the fact that one of them must have a winning strategy.
    Maker-Breaker games are first introduced by Erd\H{o}s and Selfridge~\cite{erdos1973combinatorial}, who showed that $\sum_{f\in \F }2^{-|f|}<1$  implies that Breaker has a winning strategy.
    By now Maker-Breaker games have become a well-researched class of combinatorial games,  see~\cite{hefetz2014positional} for an overview of results.
    One interesting property of Maker-Breaker games is that players do not profit from skipping moves.
    It initially may be found surprising that the analog statement does not hold for Avoider-Enforcer games --- if one or both players are offered to claim more than one board element in some moves, the outcome may change.
    That was the motivation behind the introduction of \emph{monotone Avoider-Enforcer games} in~\cite{hefetz2010avoider}, where players are allowed to claim more than one board element per move.
    To avoid confusion, the game with the original set of rules is sometimes referred to as \emph{strict Avoider-Enforcer game}.
    It turns out that both sets of rules are worth studying and the differences between them are considerable, see~\cite{hefetz2007avoider, hefetz2010avoider, bednarska2019separation, ferber2013avoider}, or see~\cite[Chapter 4]{hefetz2014positional} for an overview.

    The study of the complexity of positional games was initiated by Shaefer in~\cite{schaefer1978complexity} who shows that the problem of determining the winner of a Maker-Breaker game with winning sets of sizes up to 11 is PSPACE-complete.
    The same was recently shown to hold in~\cite{rahman20216} for Maker-Breaker games with winning sets of sizes up to six.
    A simpler proof of PSPACE-hardness of Maker-Breaker games in full generality is presented in~\cite{byskov2004maker}.

    There are numerous results on positional games played on graphs, where the board is either the edge set or the vertex set of a graph, see~\cite{hefetz2014positional} for an overview.
    As it turns out, for some of those games it is hard to determine the outcome.
    For example, the Maker-Breaker Domination Game, where the players claim the vertices of the given graph and Maker's goal is to claim a dominating set, is PSPACE-complete to resolve, even on bipartite graphs~\cite{duchene2020maker}.
    The same can be said about the Total Domination Game, a similar game where Maker wants to claim a total dominating set~\cite{gledel2020maker}.

    Let us mention here another two types of positional games, the so-called \emph{strong} games.
    In a strong Maker-Maker game the player claiming a winning set first wins, and a draw is possible. Similarly, in a strong Avoider-Avoider game a player claiming a losing set first loses the game, and again the game can end in a draw.
    The strong games are historically important, but they turn out to be notoriously hard to analyze and very few positive results about them have appeared in the literature.
    As for their complexity, it was noted in~\cite{byskov2004maker} that the PSPACE-hardness of strong Maker-Maker games can be derived in a straightforward manner from the hardness result on the Maker-Breaker games.
    Some half-played strong Avoider-Avoider games on edge sets of graphs are shown to be PSPACE-hard in~\cite{slany2002endgame}.
    \textit{Short Avoider-Enforcer} games are W[1]-complete~\cite{bonnet2017parameterized}.
    Note that the W[1]-hardness does not imply NP-hardness, as for Short Avoider-Enforcer we ask for a strategy for Avoider to survive for just $k$ steps.

     We can summarize that compared to general combinatorial games, there are very few computational complexity results in positional games.

\paragraph{Main Contribution, Discussion and Open Problems.}
    As we already mentioned, there were no previously known hardness proofs for Avoider-Enforcer games. Our main result is the following.

    \begin{theorem}
        \label{thm:strict}
        The (strict) Avoider-Enforcer game is NP-hard.
    \end{theorem}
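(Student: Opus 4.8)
The plan is to reduce from \textsc{3-Sat}. First I would record the basic reformulation of the game: since Avoider's claimed set only grows, and she loses exactly at the moment it first contains a losing set, Avoider wins if and only if her final set contains no edge of \F{} (equivalently, Enforcer's final set is a transversal of \F{}). So a winning strategy for Avoider amounts to steering the partition of $X$ so that her side stays ``independent'' in the hypergraph. The idea is then to engineer a game hypergraph in which Avoider's side is forced to encode a truth assignment, and in which her side contains a losing set precisely when that assignment falsifies a clause.

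For each variable $x_i$ I would introduce two board elements $t_i,f_i$ together with the size-two losing set $\{t_i,f_i\}$. This edge forbids Avoider from holding both, so on a board of the right parity she is forced to keep \emph{exactly} one element of each such pair (holding zero from one pair would force two in another, which is itself a loss), and her choice reads off an assignment, $t_i$ for true and $f_i$ for false. For each clause I would add a size-three losing set consisting of the three literal-elements whose joint presence in Avoider's side means the clause is falsified; for instance the clause $x_a\vee\bar{x}_b\vee x_c$ becomes the edge $\{f_a,t_b,f_c\}$. With this, Avoider's side contains a clause-edge precisely when her (necessarily complete) assignment falsifies that clause. One direction is then almost automatic: if the formula is unsatisfiable, the pair-edges pin Avoider to some complete assignment no matter how she plays, that assignment falsifies some clause, and she is driven into a full clause-edge and loses.

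The crux is the converse, and it is where the two-player nature bites. A naive construction lets Enforcer seize a variable pair by claiming one of $t_i,f_i$ first, thereby \emph{forcing} Avoider onto the opposite literal; if both players simply race for fresh pairs, Enforcer controls about half the variables and the problem degenerates into a quantified ($\exists\forall$) game rather than the purely existential one needed for NP-hardness. To get a reduction from \textsc{3-Sat} I must make Avoider the sole controller of the assignment. I would do this with auxiliary \emph{forcing} gadgets and parity padding whose only role is to consume Enforcer's tempo: using filler elements lying in no edge, together with a threat structure in which any ``off-script'' Enforcer move hands Avoider a guaranteed-safe completion, I arrange that Enforcer is compelled to answer inside the pair Avoider has just touched. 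Then Avoider first-touches every variable pair and fixes every literal herself, so if a satisfying assignment exists she simply claims the satisfying literal in each pair and wins, while the gadgets are kept neutral enough not to reopen an escape in the unsatisfiable case.

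The step I expect to be the main obstacle is precisely this middle one: designing the forcing gadgets so that they genuinely neutralize Enforcer's control over the variable pairs, \emph{without} simultaneously giving Avoider extra freedom that would let her survive an unsatisfiable instance, and then proving it. Concretely, I must manage parity and the choice of starting player, and rule out \emph{all} Enforcer deviations, certifying that optimal play collapses the game exactly onto the intended one-player existential choice of an assignment. Verifying this tempo/threat bookkeeping move-by-move, in both the satisfiable and unsatisfiable directions, is the technical heart of the reduction.
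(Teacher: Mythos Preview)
Your reduction skeleton is right, and you correctly isolate the crux: without further structure, Enforcer seizes roughly half the variable pairs and the game becomes a quantified ($\exists\forall$) problem rather than an existential one. But your proposal stops exactly where the proof has to begin. You name ``forcing gadgets and parity padding'' without specifying them, and the heuristics you do offer are problematic. Filler elements lying in no edge cannot absorb Enforcer's tempo: Enforcer never has a reason to claim such an element, since it cannot help push Avoider into a losing set, so Enforcer will simply ignore fillers and race for the variable pairs. Likewise ``a threat structure in which any off-script Enforcer move hands Avoider a guaranteed-safe completion'' is delicate to realize, because nothing Enforcer claims can ever hurt Enforcer directly; your only leverage is to make deviation let Avoider escape, while simultaneously guaranteeing that in the unsatisfiable case on-script play still traps her. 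You have not exhibited any gadget with both properties, and you acknowledge this is the ``technical heart'' left undone. As written, the argument is a plan, not a proof.

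The paper's device is quite different from what you sketch. Each variable gets a four-element \emph{box} $\{a_i,x_i,\overline{x_i},s_i\}$ with all four triples inside it declared losing; clause edges have size six, containing the three relevant $s$-vertices together with the three negated literals. Your own pigeonhole observation, applied per box on a board of size $4n$, forces Avoider to answer in whichever box Enforcer enters (otherwise she finishes with three elements in some box). A separate \emph{Monotonicity Lemma} --- if every losing set through $a$ is witnessed by one through $b$, then the winning player may always prefer $a$ to $b$ --- then lets one assume $a_i$ is claimed first and $s_i$ last within each box. With Enforcer to start, Enforcer therefore spends the first move in every box on $a_i$, Avoider freely chooses $x_i$ or $\overline{x_i}$, Enforcer takes the remaining literal, and Avoider takes $s_i$. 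Thus Avoider controls \emph{every} literal, and the clause edges finish the reduction in both directions. The idea you are missing is precisely this: embed the ``filler'' \emph{inside} the variable gadget so that a monotonicity argument, rather than an external threat, is what makes Enforcer burn his tempo there.
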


    Note that so far we have not specified who starts the game, Avoider or Enforcer.
    It turns out that does not influence our results.

    Soon after we published the first version of this manuscript, Gledel and Oijid~\cite{gledel2022avoidance} announced their work on the same subject, improving our result by proving that strict Avoider-Enforcer games are PSPACE-complete. Furthermore, they also show how to derive the PSPACE-completeness of strong Avoider-Avoider games. Their proof is elaborate and cleverly designed to exploit a number of properties of Avoider-Enforcer games. It uses our Lemma~\ref{lem:domintaion}, and on top of that their construction for the reduction is both more complex and more technical than ours.

    Having the newly obtained results from~\cite{gledel2022avoidance} in mind, we find our main contribution to be the \emph{first} proof of hardness for Avoider-Enforcer games. On top of that, our construction and analysis are not just short but arguably also reasonably simple to follow. Finally, en route we show a general structural lemma in \Cref{sec:Monotonicity} which is of independent interest, and already applied in~\cite{gledel2022avoidance}.

    We note that all the losing sets in the proof of our \Cref{thm:strict} are of size at most six, and the same holds for the PSPACE-completeness result in~\cite{gledel2022avoidance}, so we can right away conclude that Avoider-Enforcer games are PSPACE-complete even if the losing set size is restricted to at most six.

    The only type of Avoider-Enforcer games for which there is still no known proof of hardness are the monotone Avoider-Enforcer games. This is an exciting open problem, and we conjecture that they are also PSPACE-complete in the general case.

\section{Monotonicity Lemma}
\label{sec:Monotonicity}
In this section, we prove an auxiliary result that we will later use to show the correctness of our reduction.
For stating it precisely we define a generalization of the Avoider-Enforcer game.
In this version, we are allowed to introduce cardinality constraints on subsets --
the game still has the board $X$, but the losing sets now come as pairs $(f,i_f)$, where $f\subseteq X$ and $i_f\in \N$.
Avoider loses if she has claimed at least $i_f$ vertices in the set $f$, for some set $f\in\F$.
We call this game \textit{Subset Avoider-Enforcer game}.

Note that each instance of the Subset Avoider-Enforcer game can be translated into an equivalent version of the Avoider-Enforcer game, by adding all subsets $s\subseteq f$ of size $i_f$ to the edge set.
However, this might add an exponential number of subsets.

For a Subset Avoider-Enforcer game $(X,\F)$ and a vertex $v\in X$, we denote the set $L(v):=\{(f,i_f)\in \F \mid v\in f \}$.
We say $L(a) \prec L(b)$, if for every losing tuple $(f,i_f)$ in $L(a)$ there is a losing tuple $(g,i_g)$ such that $g\subseteq f$ and $i_g\leq i_f$.

\begin{lemma}[Monotonicity Lemma]
\label{lem:domintaion}
    For a Subset Avoider-Enforcer game $(X,\F{})$, let $a,b$ be vertices such  that $L(a)\prec L(b)$.
    If a player has a winning strategy, then she also has a winning strategy in which she always prefers claiming $a$ over claiming $b$.
\end{lemma}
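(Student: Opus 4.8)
Suppose the player in question (say Avoider; the Enforcer case is symmetric) has some winning strategy $S$. I want to produce a new winning strategy $S'$ with the stated preference property: whenever $S'$ is about to claim $b$ while $a$ is still unclaimed, it claims $a$ instead. The natural approach is a local exchange: run $S$, but maintain a bijection (a ``pairing'') between the actual board position and a ``phantom'' position in which $S$ is being played, where the two positions differ only by possibly having the roles of $a$ and $b$ swapped. The condition $L(a)\prec L(b)$ is exactly what guarantees that swapping $a$ for $b$ in Avoider's claimed set never turns a win into a loss.

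First I would set up the simulation. The player imagines a phantom game that $S$ is playing against the same opponent moves, and keeps the invariant that the real position and the phantom position are identical except that the contents of the two cells $a$ and $b$ may be interchanged (who owns which). Whenever $S$ tells the phantom to claim $b$ while $a$ is free, the real strategy $S'$ claims $a$ instead (and records the swap); in all other cases $S'$ copies $S$. Opponent moves are translated back and forth under the same swap so that $S$ always sees a legal, consistent phantom game. The bookkeeping to check is that this translation is always well-defined, i.e.\ that whenever a swap is ``active'' the relevant cells are in states that make the mirrored move legal; the main case to handle is when the opponent (or $S$) later touches the other of the two cells, at which point the swap is resolved.

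Second, and this is the crux, I would argue that if the phantom game is won by the player (since $S$ is winning), then the real game is also won. For Avoider this means: if Avoider's real claimed set $A'$ avoids all losing sets whenever the phantom set $A$ does, the swap is safe. By construction $A'$ and $A$ differ only in that $A'$ may contain $a$ where $A$ contained $b$ (or vice versa). The only dangerous direction is $A'$ containing $a$ in place of $A$ containing $b$: I must show that if $A'$ reaches the threshold $i_f$ on some set $f\ni a$, then $A$ already reached a threshold on some set $g$, contradicting that $A$ is winning. This is precisely where $L(a)\prec L(b)$ is used: for the losing tuple $(f,i_f)\in L(a)$ there is $(g,i_g)\in\F$ with $g\subseteq f$ and $i_g\le i_f$; since $g\subseteq f$ and $a$ is the only element possibly gained by the swap (with $b$ lost), I would check that $A$'s count in $g$ already meets $i_g$, so the phantom position was in fact losing --- contradiction. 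The reverse direction ($A$ losing on a set through $b$ forcing $A'$ to lose) does \emph{not} need the hypothesis and follows because dropping $b$ only decreases counts.

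\textbf{The main obstacle} I anticipate is handling the interaction over the whole course of the game rather than a single move: a swap may remain ``pending'' for many turns, and it must be shown that the invariant survives every subsequent move by either player, in particular that the player never needs to claim an already-claimed cell in the real game and that the final real and phantom positions differ by at most the single $a$/$b$ transposition. I would formalize this by an induction on moves, carefully enumerating the cases according to which of $a,b$ are free/claimed and by whom, and verifying the invariant is preserved in each. Once the invariant is established for the terminal position, the safety argument of the previous paragraph (using $L(a)\prec L(b)$) finishes the proof; iterating the construction over all such pairs, or arguing it directly yields a strategy respecting the preference, completes the lemma.
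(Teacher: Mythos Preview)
Your swap/simulation approach is sound and will go through, but it is a genuinely different route from the paper's. The paper factors the argument through two auxiliary devices: the \emph{induced game} $\HH(A,E)$ (delete already-claimed vertices and adjust the thresholds accordingly), together with a hypergraph-level relation $(X,\F)\prec(X,\G)$ and the one-line lemma that an Avoider win on $\G$ transfers to $\F$. The proof is then almost immediate: pass to the induced game so that the decision in question is the first move, observe that $\HH(\{a\},\varnothing)\prec\HH(\{b\},\varnothing)$ follows from $L(a)\prec L(b)$, and apply the transfer lemma (with the Enforcer case symmetric). This completely sidesteps the multi-round bookkeeping you flag as the main obstacle, and it isolates a reusable structural statement; your approach is more self-contained but pays for it with the case analysis tracking the pending swap.

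One technical point in your ``crux'' paragraph deserves correction. Your plan to pass to the witness $(g,i_g)$ and ``check that $A$'s count in $g$ already meets $i_g$'' does not work as written: from $g\subseteq f$ and $|A'\cap f|\ge i_f$ you get no lower bound on $|A\cap g|$. What actually closes the argument is simpler. The witness $(g,i_g)$ lies in $L(b)$, so $b\in g\subseteq f$; hence \emph{every} losing set $f$ containing $a$ already contains $b$, and the ``dangerous'' case $a\in f,\ b\notin f$ is vacuous. Consequently the transposition of $a$ and $b$ never increases $|{\cdot}\cap f|$ for Avoider (and never decreases it for the Enforcer direction) on any losing set, and the phantom outcome transfers to the real game directly without invoking $g$ at all.
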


\begin{figure}[tbhp]
    \centering
    \includegraphics{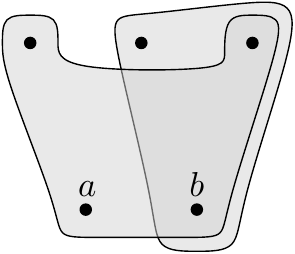}
    \caption{Intuitively, \Cref{lem:domintaion} states that if $b$ is contained in all losing sets that $a$
    is contained in, then we can assume that $a$ is preferred over $b$ by both players.}
    \label{fig:monotone}
\end{figure}

We will prove this lemma in several steps. We write $(X,\F{}) \prec (X,\G{})$
if for every losing pair $(f,i_f) \in \F{}$
there is a losing pair $(g,i_g)\in \G{} $
with $g \subseteq f$ and $i_g\leq i_f$.
Note that we use the notation $\prec$ also for hypergraphs with
different vertex sets.
In that case, we assume that there is a bijection between the vertex sets that is obvious from the context.

\begin{lemma}
    \label{lem:GraphMonotonicity}
    Let $(X,\F{}) \prec (X,\G{})$ and assume that Avoider has a winning strategy for $(X,\G{})$, then Avoider also has a winning strategy for $(X,\F{})$.
\end{lemma}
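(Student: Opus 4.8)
The plan is to prove the statement by a verbatim strategy transfer. The two games are played on the same board $X$ with the same move structure — the players alternately claim unclaimed vertices of $X$ until the board is exhausted — and they differ only in how the terminal position is scored. Hence any strategy for $(X,\G{})$ is literally a strategy for $(X,\F{})$ as well, and it suffices to show that whenever such a strategy guarantees a win in $(X,\G{})$ it also guarantees a win in $(X,\F{})$. Because a strategy's outcome is determined entirely by the final set of vertices that Avoider ends up holding, the whole argument reduces to a purely static comparison of terminal positions; in particular, no induction on the game tree and no case distinction on who moves first is needed.

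First I would reformulate the two win conditions position-wise. For a set $S\subseteq X$ — to be read as the collection of vertices Avoider has claimed at the end of a play — say that $S$ is \emph{losing for} $\F{}$ if there is a pair $(f,i_f)\in\F{}$ with $|S\cap f|\ge i_f$, and analogously for $\G{}$. Avoider wins a play exactly when her final set $S$ is not losing. The key claim is then the containment: every $S$ that is losing for $\F{}$ is also losing for $\G{}$.

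To establish the claim I would use the defining property of $(X,\F{})\prec(X,\G{})$ directly. Fix an $S$ that is losing for $\F{}$, witnessed by a pair $(f,i_f)\in\F{}$ with $|S\cap f|\ge i_f$, and let $(g,i_g)\in\G{}$ be the dominating pair supplied by the relation $\prec$. Comparing the two trigger quantities $|S\cap f|$ and $|S\cap g|$ through the set containment relating $f$ and $g$, together with the threshold inequality $i_g\le i_f$, yields $|S\cap g|\ge i_g$, so $S$ is losing for $\G{}$ as well. This single monotonicity computation is the only place the hypothesis is used, and it is exactly the step whose direction must be pinned down with care: the containment between $f$ and $g$ has to line up with the inequality between $i_f$ and $i_g$ so that the family of Avoider-sets triggering $(f,i_f)$ is \emph{nested inside} the family triggering $(g,i_g)$. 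I expect this to be the main — indeed essentially the only — real obstacle; everything around it is bookkeeping.

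Finally I would assemble the pieces. Let $\sigma$ be Avoider's winning strategy for $(X,\G{})$. Played against an arbitrary Enforcer strategy, $\sigma$ produces a terminal Avoider-set $S$ that is not losing for $\G{}$; by the contrapositive of the claim, $S$ is then not losing for $\F{}$ either, so the same play is an Avoider win in $(X,\F{})$. Since this holds against every Enforcer response, $\sigma$ is a winning strategy for Avoider in $(X,\F{})$, as desired.
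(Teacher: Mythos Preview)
Your approach is exactly the paper's: copy Avoider's winning strategy from $(X,\G)$ to $(X,\F)$ and argue that any terminal set of Avoider-vertices that is safe for $\G$ is automatically safe for $\F$. The paper's whole proof is the two sentences ``Avoider plays in $(X,\F)$ exactly the same as she would play to win in $(X,\G)$. As she can avoid all sets in \G, she also avoids all sets in \F.''

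However, you were right to flag the monotonicity inequality as the one real obstacle, and you should not have waved it through. Under the paper's stated definition of $\prec$ the witnessing pair satisfies $g\subseteq f$ and $i_g\le i_f$; from $g\subseteq f$ you only get $|S\cap g|\le |S\cap f|$, which points the \emph{wrong} way, and $i_g\le i_f$ does not rescue it. Concretely, take $X=\{1,\dots,5\}$, $\F=\{(X,2)\}$, $\G=\{(\{1\},1)\}$: then $(X,\F)\prec(X,\G)$, and with Enforcer starting Avoider wins $(X,\G)$ by simply never touching vertex~$1$, yet Avoider always loses $(X,\F)$ since she is forced to claim two vertices. So the implication ``$|S\cap f|\ge i_f\Rightarrow|S\cap g|\ge i_g$'' that both you and the paper invoke is false in general. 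The repair is at the level of the definition: read the containment as $f\subseteq g$ (then $|S\cap g|\ge|S\cap f|\ge i_f\ge i_g$ and your chain goes through verbatim), or require $|f\setminus g|\le i_f-i_g$. In the paper's actual uses of the lemma the witnessing pair can always be taken to be $(g,i_g)=(f,i_f)$, so nothing downstream is affected.
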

\begin{proof}
    Avoider plays in $(X,\F)$ exactly the same as she would play to win in $(X,\G)$.
    As she can avoid all sets in \G, she also avoids all sets in \F.
\end{proof}

Assume that at some point during the game play in a game $\HH = (X,\F{})$ Avoider claimed vertices $A\subseteq X$ and Enforcer claimed vertices
$E\subseteq X$, then we define the \textit{induced game} $\HH(A,E) = (Y,\G)$ as follows.
The board is $Y = X \setminus (A\cup E)$.
Furthermore, we define \G from \F as follows.
We shrink all losing sets $f$ to $g = g(f) = f\cap Y$.
We reduce $i_f$ to $i_{g}$, by reducing it by $|f\cap A|$.
In case that $i_g > |g|$, we can delete the tuple from \G.

\begin{observation}
    \label{obs:inducedGames}
    Avoider has a winning strategy in the half-played game $\HH$, after vertex sets $A$ and $E$ have been already claimed by, respectively, Avoider and Enforcer, if and only if Avoider has a winning strategy in the induced game $\HH(A,E)$.
\end{observation}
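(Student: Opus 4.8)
The plan is to show that continuing the half-played game $\HH$ from the position in which Avoider holds $A$ and Enforcer holds $E$ is, as a game, the very same game as playing the induced game $\HH(A,E)=(Y,\G)$ from the start, so that the two admit exactly the same winning strategies for Avoider. First I would observe that in the remainder of $\HH$ the only unclaimed vertices are precisely those of $Y=X\setminus(A\cup E)$, and the two players keep alternating; since the parity of the number of already-claimed vertices $|A|+|E|$ fixes whose turn it is, the turn order in the continuation is inherited by $\HH(A,E)$. Hence plays of the two games are in an obvious bijection: one simply records, for each vertex of $Y$, whether it is eventually taken by Avoider or by Enforcer.

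The heart of the argument is that this bijection preserves the outcome for Avoider. Fix a losing pair $(f,i_f)\in\F$, and let $g=f\cap Y$ with reduced threshold $i_g=i_f-|f\cap A|$. At the end of any continuation of $\HH$, the number of vertices of $f$ claimed by Avoider equals $|f\cap A|$ (already fixed) plus the number she claims in $g$ during the continuation. Therefore she reaches the losing threshold $i_f$ on $f$ if and only if she claims at least $i_g$ vertices of $g$, which is exactly the losing condition attached to $(g,i_g)$ in $\HH(A,E)$. It then remains to justify the reductions in the definition of $\G$. If $i_g>|g|$, then $i_f>|f\cap A|+|f\cap Y|=|f|-|f\cap E|$, so even by claiming every vertex of $f$ outside $E$ Avoider would stay short of $i_f$; the set $f$ can never become a losing set and may be discarded. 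If instead $i_g\le 0$, then Avoider has already claimed at least $i_f$ vertices of $f$ and has lost no matter how play continues, so she has no winning strategy in either game and the biconditional holds trivially with both sides false; otherwise $i_g\ge 1$ and the correspondence above applies verbatim.

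Putting these pieces together, the continuation of $\HH$ and the game $\HH(A,E)$ have the same board $Y$, the same alternation order, and—via the bijection—the same collection of Avoider-losing terminal positions. Two games that agree on board, turn order, and win/loss labels of terminal positions are identical, so they share the same strategies and the same winner; in particular Avoider has a winning strategy in one exactly when she has one in the other. If one prefers a more hands-on argument, the same conclusion follows by a routine induction on $|Y|$, transporting a winning strategy one move at a time. The only genuinely delicate points are the turn-order bookkeeping and the edge cases of the threshold reduction treated above; once those are settled the equivalence is immediate, so I expect no substantial obstacle beyond this careful case analysis.
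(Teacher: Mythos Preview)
The paper states this as an observation without proof, treating it as self-evident from the definition of the induced game. Your argument is correct and supplies exactly the routine verification the paper omits: the bijection between continuations and plays on $Y$, the threshold bookkeeping showing that Avoider hits $i_f$ on $f$ iff she hits $i_g$ on $g$, and the two edge cases $i_g>|g|$ and $i_g\le 0$.
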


\begin{proof}[Proof of \Cref{lem:domintaion}]
    Assume that it is Avoider's turn to play.
    By \Cref{obs:inducedGames}, we can assume that this is the first move, by going to the induced game.
    Let $\HH_1  = \HH(\{a\}, \varnothing)$ and
    $\HH_2  = \HH( \{b\}, \varnothing)$.
    Then by the lemma assumption, it holds that $\HH_1 \prec \HH_2$.
    Thus by \Cref{lem:GraphMonotonicity} if Avoider has a winning strategy in $\HH_2$, then she also has a winning strategy in $\HH_1$.

    The argument for the case when it is Enforcer's turn is analog.
\end{proof}

\section{NP-hardness}
\label{sec:NP}
This section is devoted to the proof of \Cref{thm:strict}.
The idea is to reduce from 3SAT.

For that purpose, let $\varphi$ be a 3SAT instance on $n$ variables.
We construct a game $(X,\F{})$ on $4n$ vertices such that
Avoider has a winning strategy if and only if $\varphi$ is satisfiable.
We initially assume that Enforcer starts the game.
If we want Avoider to start the game, we can just add another vertex to the board that is not contained in any losing set. By \Cref{lem:domintaion}, we can assume that Avoider claims this vertex and Enforcer has the first move on the original board.

\begin{figure}[thbp]
    \centering
    \includegraphics[page = 1]{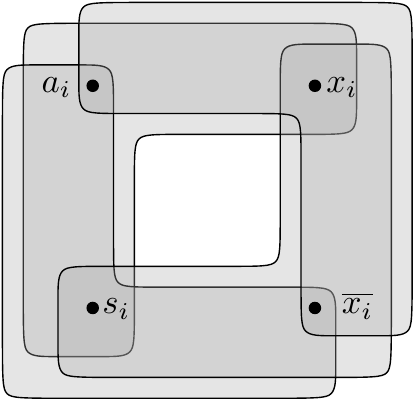}
    \caption{A group of four vertices with four triples being losing sets.}
    \label{fig:boxDefinition}
\end{figure}

\paragraph{Construction.}
In the first step of the construction, we construct $n$ groups of vertices with four vertices each,
see \Cref{fig:boxDefinition}.
We call each group a box, and we name the vertices in box $i$ by
\[a_i,s_i,x_i,\overline{x_i}.\]
This completely defines the vertex set $X$.
Note that, by a slight abuse of notation, we use the same names for vertices in $(X,\F)$ and the corresponding literals $x_i$ and $\overline{x_i}$ of $\varphi$.

We now move on to defining the losing sets~$\F{}$.
Firstly, for every box, we add all four triples of the vertices within that box to $\F{}$, see~\Cref{fig:boxDefinition}.

    Next, we construct more losing sets to encode clauses.
    We denote by $\overline{\ell}$ the negation of a literal $\ell$.
    For every clause $C=\ell_i\lor \ell_j \lor \ell_k$, where $\ell_h$ is either $x_h$ or $\overline{x_h}$, for $h=i,j,k$, we add the losing set
    $L_C=\{s_i,s_j,s_k, \overline{\ell_i}, \overline{\ell_j}, \overline{\ell_k}\}$, see \Cref{fig:example} for an illustration.

    This finishes the construction of the hypergraph $(X,\F{})$.

     \begin{figure}[bp]
        \centering
        \includegraphics[page =3]{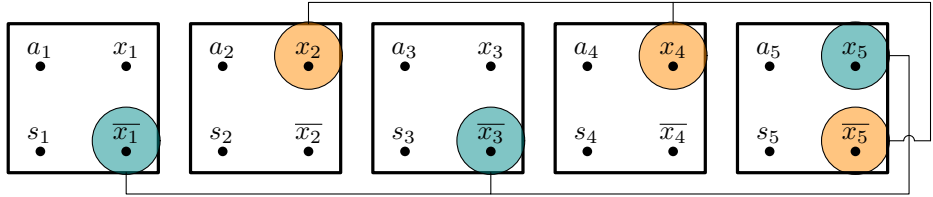}
        \caption{For each clause of $\varphi$, we construct a losing set of size six.}
        \label{fig:example}
    \end{figure}

  \bigskip

    Note that the construction can be done in linear time and space and that the collection $\F{}$ contains only losing sets with at most six vertices.
We will ultimately show that $\varphi$ is satisfiable if and only if Avoider has a winning strategy on $(X,\F{})$.
First, we need to prove two auxiliary lemmas that give some insights into the optimal strategies of Avoider and Enforcer.

\begin{lemma}
    \label{lem:box}
    Whenever Enforcer plays in one of the boxes Avoider has to play in the same box.
\end{lemma}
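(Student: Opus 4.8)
The plan is to reduce the statement to a simple counting invariant about how the four vertices of each box are divided between the two players. Write $\alpha_i$ and $e_i$ for the number of vertices of box $i$ already claimed by Avoider and by Enforcer, respectively. First I would record the constraint coming from the four losing triples: since every triple inside a box belongs to \F, the moment Avoider holds three vertices of a box she has claimed a losing set, so in any winning play $\alpha_i\le 2$ for every $i$. As Enforcer starts and the board has $4n$ vertices, each player makes exactly $2n$ moves; distributing Avoider's $2n$ claims among $n$ boxes subject to $\alpha_i\le 2$ forces $\alpha_i=2$ in every box at the end. Equivalently, Avoider loses as soon as Enforcer manages to claim three vertices of a single box, since then $\alpha_i\le 1$ there and some other box is overfilled.

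Next I would prove, by induction on the number of rounds, that any winning strategy for Avoider keeps every non-full box balanced, i.e.\ $e_i=\alpha_i$ at the start of each of Enforcer's turns (full boxes being split evenly). Granting the invariant, the lemma is immediate: when Enforcer claims a vertex of a box $B$, that box becomes the unique unbalanced one, with $e_B=\alpha_B+1$ and a free vertex still present, so the only move that restores $e_i=\alpha_i$ everywhere is a claim inside $B$, and Avoider is forced to answer in $B$.

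The crux, and the step I expect to be the main obstacle, is justifying that this balance invariant is genuinely necessary, i.e.\ that a winning Avoider cannot afford to leave a box with Enforcer one vertex ahead while a free vertex remains. Here I would argue that from such a position Enforcer wins by simply hammering that box. Using the induction hypothesis, before Enforcer's move box $B$ was balanced with a free vertex, so $(\alpha_B,e_B)$ is $(0,0)$ or $(1,1)$; after Enforcer plays in $B$ and Avoider answers elsewhere, box $B$ enters Enforcer's next turn in state $(0,1)$ or $(1,2)$. In either case Enforcer keeps playing in $B$: from $(1,2)$ one move gives $e_B=3$, while from $(0,1)$ the sequence of plays in $B$ again reaches $e_B=3$, regardless of whether Avoider also contests $B$. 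Once $e_B=3$ we have $\alpha_B\le 1$, contradicting the first paragraph, so Avoider loses. Verifying that these two local states are the only ones that can arise, and that the short local race is forced for Enforcer, is the only place a (very short) case analysis is needed; everything else is bookkeeping on the counts $\alpha_i$ and $e_i$.
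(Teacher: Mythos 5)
Your proposal is correct and is essentially the paper's own argument: the paper also takes the first deviation, lets Enforcer ``hammer'' that box up to three of its vertices, and then uses the same counting/pigeonhole observation (each player claims exactly $2n$ vertices, so Avoider is forced to three vertices in some other box) to derive a contradiction. The explicit balance invariant and induction you add are just a reformulation of this; no substantive difference.
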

\begin{proof}
    Suppose for the purpose of contradiction that Enforcer plays in a box $a$ and that Avoider in the same round responds by playing in some other box. Let us further assume that this is the first time that this happens.
    Then Enforcer will keep playing in the box $a$, thus managing to claim at least three vertices of box $a$.

    As the total number of vertices of the board is even, Avoider and Enforcer will have the same numbers of claimed vertices at the end of the game.
    This implies that Avoider surely claimed at least three vertices in some other box.
    As all the triples in a box are losing sets, it follows that Avoider lost the game.
\end{proof}

\begin{lemma}
\label{lem:validStrategies}
If we restrict the game play such that within each box $i$ the vertex $a_i$ is claimed before $x_i$ and $\overline{x_i}$, and $s_i$ is claimed after $x_i$ and $\overline{x_i}$, the outcome of the game stays the same.
\end{lemma}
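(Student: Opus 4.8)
The plan is to reduce the statement to a purely local claim about the order in which the four vertices of a single box are claimed, and then to establish that local order in two stages using \Cref{lem:box} together with the monotonicity machinery of \Cref{sec:Monotonicity}. First I would invoke \Cref{lem:box}: since Enforcer starts and Avoider must answer in the box Enforcer just played, the moves inside any box $i$ occur in the order Enforcer, Avoider, Enforcer, Avoider, so each player claims exactly two vertices of box $i$ and the box is opened by Enforcer. The temporal order of the four vertices is thus fixed by two Enforcer choices (the first and third claims) and two Avoider choices (the second and fourth). The two required constraints, $a_i$ before $x_i,\overline{x_i}$ and $s_i$ after $x_i,\overline{x_i}$, are together equivalent to the single pattern ``$a_i$ first, then $x_i,\overline{x_i}$ in some order, then $s_i$''. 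It therefore suffices to show that the player holding a winning strategy has one realizing this pattern in every box: constraining the losing player's moves only shrinks that player's options and cannot change the winner, so ``the outcome stays the same'' follows.

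For the deferral of $s_i$ I would work in the induced game (\Cref{obs:inducedGames}) after $a_i$ has been claimed by Enforcer. Each of the three box-triples through $a_i$ then loses its Enforcer-held vertex while its threshold is unchanged, so the threshold exceeds the remaining size and the triple is deleted. The only losing sets still meeting $x_i$ (respectively $\overline{x_i}$) are the surviving triple $\{s_i,x_i,\overline{x_i}\}$ and the clause sets through that literal, every one of which also contains $s_i$. Hence $L(x_i)\prec L(s_i)$ and $L(\overline{x_i})\prec L(s_i)$ hold in the induced game, and \Cref{lem:domintaion} lets the winner prefer each literal over $s_i$. Since only $s_i,x_i,\overline{x_i}$ remain in the box, this pushes $s_i$ into the last slot and leaves the two literals in the middle.

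The main obstacle is the first stage, ``$a_i$ is claimed first''. In a fresh box all four triples are present, so the four vertices are completely symmetric with respect to the box losing sets, no relation $L(a_i)\prec L(v)$ holds, and \Cref{lem:domintaion} cannot be applied as a black box. Instead I would compare the induced games produced by Enforcer's opening move directly. Opening with $a_i$ deletes exactly the three triples through $a_i$, replaces the box by the single set $\{s_i,x_i,\overline{x_i}\}$, and touches no clause set, since $a_i$ lies in none. Opening with any of $s_i,x_i,\overline{x_i}$ deletes the three triples through that vertex and, crucially, also drops every clause set containing it below its threshold, deleting those clause sets as well. Consequently $\HH(\varnothing,\{v\})\prec\HH(\varnothing,\{a_i\})$ for each $v\in\{s_i,x_i,\overline{x_i}\}$, so by the Enforcer analogue of \Cref{lem:GraphMonotonicity} an Enforcer win in $\HH(\varnothing,\{v\})$ transfers to $\HH(\varnothing,\{a_i\})$; that is, Enforcer may open every box with $a_i$ without losing a won game.

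Combining the two stages yields the pattern $a_i,\{x_i,\overline{x_i}\},s_i$ in every box, which is exactly the claimed restriction, and since at each step we only replaced a winning strategy by an equally winning one the outcome is unchanged. I expect the bookkeeping in the third-paragraph comparison, namely verifying the subset relations under the natural board bijection ($a_i$ matched to $v$) and checking that no surviving clause set of $\HH(\varnothing,\{v\})$ is left uncovered in $\HH(\varnothing,\{a_i\})$, to be the only genuinely delicate part.
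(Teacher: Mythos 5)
Your proof takes a genuinely different route from the paper's, and the divergence starts where you assert that in a fresh box ``no relation $L(a_i)\prec L(v)$ holds, and \Cref{lem:domintaion} cannot be applied as a black box.'' That is true for the plain hypergraph, but it is precisely the obstacle that the Subset Avoider-Enforcer formalism was introduced to remove. The paper's proof replaces the four triples of box $i$ by the single equivalent cardinality constraint $(\{a_i,s_i,x_i,\overline{x_i}\},3)$; in the resulting Subset game $L(a_i)$ consists of that one tuple only (as $a_i$ lies in no clause set), while $L(x_i),L(\overline{x_i})$ add clause tuples each of which contains $s_i$, so all four relations $L(a_i)\prec L(x_i)$, $L(a_i)\prec L(\overline{x_i})$, $L(x_i)\prec L(s_i)$, $L(\overline{x_i})\prec L(s_i)$ hold globally, and \Cref{lem:domintaion} delivers the entire ordering for whichever player is winning, in one stroke. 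Your workaround --- a direct comparison of induced games for the opening move, plus \Cref{lem:domintaion} in the induced game once Enforcer holds $a_i$ --- is correct where it applies; in particular the comparison $\HH(\varnothing,\{v\})\prec\HH(\varnothing,\{a_i\})$ does check out.

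There is, however, a genuine gap in the case where Enforcer is the winner. Your decomposition rests on the Enforcer--Avoider--Enforcer--Avoider structure obtained from \Cref{lem:box}, but that lemma only constrains a \emph{winning} Avoider; against a winning Enforcer, a restricted but otherwise arbitrary Avoider may legally open box $i$ with $a_i$ herself. Then the hypothesis of your second stage ($a_i$ claimed \emph{by Enforcer}) fails: with $a_i$ claimed by Avoider, the triple $\{a_i,x_i,\overline{x_i}\}$ survives in the induced game as the pair $(\{x_i,\overline{x_i}\},2)$, which lies in $L(x_i)$ but has no subset containing $s_i$, so $L(x_i)\not\prec L(s_i)$ there and \Cref{lem:domintaion} no longer forces Enforcer's preference for the literals over $s_i$. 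The preference is in fact still valid --- a direct comparison of $\HH(A,E\cup\{s_i\})$ with $\HH(A,E\cup\{x_i\})$, in the spirit of your third paragraph, shows that claiming $x_i$ leaves at least as many losing sets for Avoider --- but you would need to supply that additional argument; as written, Enforcer's restricted winning strategy is not fully constructed. The paper's reformulation sidesteps all such case analysis because each $\prec$ relation is witnessed by the identical tuple, so the relations persist in every induced game regardless of who claimed what.
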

\begin{proof}
    Note that the triples in each box can be replaced by the cardinality constraint at most 2 in each box.
    We are now considering the corresponding Subset Avoider-Enforcer game.
    In that game we have that, for every $i$, $L(a_i) \prec L(x_i)$, $L(a_i) \prec L(\overline{x_i})$ and $L(x_i)\prec L(s_i)$, $L(\overline{x_i}) \prec L(s_i)$.
    Therefore, by \Cref{lem:domintaion}, the player who can win can do so by preferring $a_i$ over $x_i$ and $\overline{x_i}$, and by preferring $x_i$ and $\overline{x_i}$ over $s_i$. Knowing this we can restrict both players to claiming elements of the box in this order without changing the game outcome.
\end{proof}

\paragraph{Satisfiability implies Avoider wins.}
We show that the satisfiability of $\varphi$ implies a winning strategy for Avoider.
To this end, let $A$ be an assignment of all the variables that satisfy $\varphi$.

Before analyzing the game, let us restrict the game play according to the condition of \Cref{lem:validStrategies}, knowing that it does not affect the outcome of the game.
We are going to let Avoider play in the same box as Enforcer, as required by \Cref{lem:box}.
Note that under these assumptions, in each box $i$  Enforcer will claim $a_i$, then Avoider has the choice of claiming either vertex $x_i$ or vertex $\overline{x_i}$, leaving the other one to Enforcer, and finally, Avoider claims $s_i$.
Avoider will choose all $x_i$s that are true in $A$ and all $\overline{x_i}$s that are false in $A$.

We have to show that Avoider wins by following this strategy, or in other words, that she will not claim a complete losing set.
Firstly, Avoider will avoid all losing sets of size three as she will claim exactly two vertices in each box.
Furthermore, let $L_C=\{s_i,s_j,s_k, \ell_i, \ell_j, \ell_k\}$ be an arbitrary losing set encoding a clause.
Then $L_C$ corresponds to the clause
$C = \overline{\ell_i} \lor \overline{\ell_j} \lor \overline{\ell_k} $.
As $A$ is a satisfying assignment, we have at least one of the literals satisfied.
W.l.o.g.~let us assume that $\overline{\ell_i}$ is true in $A$.
Then according to the strategy Avoider has chosen vertex $\overline{\ell_i}$ and left the vertex $\ell_i$ to Enforcer,
and thus Avoider managed to avoid at least one vertex from $L_C$.
This finishes the argument for the first implication.

\paragraph{Avoider wins implies satisfiability.}
We move on to show that if Avoider has a winning strategy, then $\varphi$ is satisfiable.

We first restrict the game play according to the condition of \Cref{lem:validStrategies}. As this does not affect the outcome of the game, we know that Avoider still has a winning strategy. \Cref{lem:box} implies that, in any winning strategy, in each round, Avoider must respond in the same box in which Enforcer claims his vertex. Similarly as before we can conclude that if Avoider plays to win under the above conditions, in each box $i$ she will claim exactly one of the vertices $x_i$ and $\overline{x_i}$, and the vertex $s_i$.

For every $i$, if Avoider claimed the vertex $x_i$ we set the variable $x_i$ to true, and if Avoider claimed the vertex $\overline{x_i}$ we set the variable $x_i$ to false.
This defines $A$.

Now we have to show that $A$ is a satisfying assignment of $\varphi$.
Consider the clause $C=\ell_i\lor \ell_j \lor \ell_k$, where $\ell_h$ is either $x_h$ or $\overline{x_h}$, for $h=i,j,k$.
Then it holds that Avoider picked at least one of the vertices $\ell_i$, $\ell_j$, and $\ell_k$, as otherwise Avoider  would claim all vertices of $L_C=\{s_i,s_j,s_k, \overline{\ell_i}, \overline{\ell_j}, \overline{\ell_k}\}$ and lose the game.
Thus at least one of the literals in $C$ is true under $A$.
This shows that each clause is satisfied and thus $\varphi$ is satisfiable.

 \paragraph{Acknowledgments.}
 We would like to thank {\'E}douard Bonnet for pointers to the literature.

\bibliographystyle{plain}
\bibliography{references}

\begin{thebibliography}{10}

\bibitem{beck2008combinatorial}
J{\'o}zsef Beck.
\newblock {\em Combinatorial games: {T}ic-{T}ac-{T}oe theory}, volume 114.
\newblock Cambridge University Press Cambridge, 2008.

\bibitem{bednarska2019separation}
Ma{\l{}}gorzata Bednarska-Bzd\k{e}ga, Omri Ben-Eliezer, Lior Gishboliner, and
  Tuan Tran.
\newblock On the separation conjecture in {A}voider--{E}nforcer games.
\newblock {\em Journal of Combinatorial Theory, Series B}, 138:41--77, 2019.

\bibitem{bonnet2017parameterized}
{\'E}douard Bonnet, Serge Gaspers, Antonin Lambilliotte, Stefan R{\"u}mmele,
  and Abdallah Saffidine.
\newblock The parameterized complexity of positional games.
\newblock {\em arXiv preprint arXiv:1704.08536}, 2017.

\bibitem{byskov2004maker}
Jesper~Makholm Byskov.
\newblock Maker-{M}aker and {M}aker-{B}reaker games are {PSPACE}-complete.
\newblock {\em BRICS Report Series}, 11(14), 2004.

\bibitem{duchene2020maker}
Eric Duchene, Valentin Gledel, Aline Parreau, and Gabriel Renault.
\newblock Maker-{B}reaker domination game.
\newblock {\em Discrete Mathematics}, 343(9):111955, 2020.

\bibitem{erdos1973combinatorial}
Paul Erd\H{o}s and John Selfridge.
\newblock On a combinatorial game.
\newblock {\em Journal of Combinatorial Theory, Series A}, 14(3):298--301,
  1973.

\bibitem{ferber2013avoider}
Asaf Ferber, Michael Krivelevich, and Alon Naor.
\newblock Avoider-{E}nforcer games played on edge disjoint hypergraphs.
\newblock {\em Discrete Mathematics}, 313(24):2932--2941, 2013.

\bibitem{gledel2020maker}
Valentin Gledel, Michael Henning, Vesna Ir{\v{s}}i{\v{c}}, and Sandi
  Klav{\v{z}}ar.
\newblock Maker-{B}reaker total domination game.
\newblock {\em Discrete Applied Mathematics}, 282:96--107, 2020.

\bibitem{gledel2022avoidance}
Valentin Gledel and Nacim Oijid.
\newblock Avoidance games are pspace-complete.
\newblock {\em arXiv preprint arXiv:2209.11698}, 2022.

\bibitem{hefetz2010avoider}
Dan Hefetz, Michael Krivelevich, Milo{\v{s}} Stojakovi{\'c}, and Tibor
  Szab{\'o}.
\newblock Avoider-{E}nforcer: {T}he rules of the game.
\newblock {\em Journal of Combinatorial Theory, Series A}, 117(2):152--163,
  2010.

\bibitem{hefetz2014positional}
Dan Hefetz, Michael Krivelevich, Milo{\v{s}} Stojakovi{\'c}, and Tibor
  Szab{\'o}.
\newblock {\em Positional games}, volume~44.
\newblock Springer, 2014.

\bibitem{hefetz2007avoider}
Dan Hefetz, Michael Krivelevich, and Tibor Szab{\'o}.
\newblock {A}voider-{E}nforcer games.
\newblock {\em Journal of Combinatorial Theory, Series A}, 114(5):840--853,
  2007.

\bibitem{rahman20216}
Md~Lutfar Rahman and Thomas Watson.
\newblock 6-uniform {M}aker-{B}reaker game is {PSPACE}-complete.
\newblock In {\em Proceedings of the 38th International Symposium on
  Theoretical Aspects of Computer Science (STACS)}, 2021.

\bibitem{schaefer1978complexity}
Thomas Schaefer.
\newblock On the complexity of some two-person perfect-information games.
\newblock {\em Journal of Computer and System Sciences}, 16(2):185--225, 1978.

\bibitem{slany2002endgame}
Wolfgang Slany.
\newblock Endgame problems of {S}im-like graph {R}amsey avoidance games are
  {PSPACE}-complete.
\newblock {\em Theoretical computer science}, 289(1):829--843, 2002.

\end{thebibliography}

\end{document}